\newcommand\job[3]{
    \draw[fill=black!#3,draw] (#1,0) -- (#1,#2) -- (#1+#2,#2) -- (#1,0);
    \node at (#1+#2*0.26,2*#2/3) {#2};
}
\newcommand\joblabel[3]{
    \draw[fill=white,draw] (#1,0) -- (#1,#2) -- (#1+#2,#2) -- (#1,0);
    \node at (#1+#2*0.26,2*#2/3) {#3};
}
\newcommand\jobdiscrete[3]{
    \foreach \i in {1,...,#2}
    {
        \draw[dotted] (#1,\i-1) rectangle (#1+\i,\i);
    }
    \draw (#1,0) -- (#1,#2) -- (#1+#2,#2);
    \foreach \i in {1,...,#2}
    {
        \draw (#1+\i-1,\i-1) -- (#1+\i,\i-1) -- (#1+\i,\i);
    }
}
\newcommand{\mypara}[1]{\smallskip\noindent\textbf{#1.}}  
\theoremstyle{plain}
\newtheorem{definition}{Definition}
\newtheorem{theorem}{Theorem}
\newtheorem{lemma}{Lemma}
\newtheorem{claim}{Claim}
\title{The triangle scheduling problem}
\author{Christoph D\"urr%
\thanks{Sorbonne Universités, UPMC Univ Paris 06, CNRS, LIP6, Paris, France}
\and
Zden\v{e}k Hanz\'alek%
\thanks{Department of Control Engineering, The Czech Technical University in Prague, Czech Republic}
\and
Christian Konrad\thanks{School of Computer Science, Reykjavik University, Island}
\and
Yasmina Seddik\footnotemark[2]
%
\and
Ren\'e Sitters%
\thanks{Department of Econometrics and Operations Research, Vrije Universiteit, The Netherlands}
\and
\'Oscar C. V\'asquez%
\thanks{Department of Industrial Engineering, University of Santiago of Chile}
\and
Gerhard Woeginger%
\thanks{Department of Mathematics and Computer Science, Eindhoven University of Technology, The Netherlands}
}
\begin{document}

\maketitle

\begin{abstract}
  This paper introduces a novel scheduling problem, where jobs occupy a triangular shape on the time line.  This problem is motivated by scheduling jobs with different criticality levels.  A measure is introduced, namely the \emph{binary tree ratio}. It is shown that the greedy algorithm solves the problem to optimality when the binary tree ratio of the input instance is at most $2$.  We also show that the problem is unary NP-hard for instances with binary tree ratio strictly larger than 2, and provide a quasi polynomial time approximation scheme (QPTAS). The approximation ratio of Greedy on general instances is shown to be between 1.5 and 1.05.
\end{abstract}

\section{Introduction}

\mypara{Mixed-criticality Scheduling} In a mixed-criticality system, tasks with different criticality levels
coexist and need to share common resources, such as bandwidth in a communication channel \cite{hts16,hp98}
or execution time on a machine \cite{ves07,bbdlmms10,bls10,cnqw13,pk11,spbb13}.
Contrary to single-criticality systems, the estimated worst-case executing time (WCET) of a task depends on
its criticality level (the higher the criticality level, the more time is estimated). Often, however, the actual
execution times of tasks are not known beforehand and the estimated WCETs deviate hugely from the actual
execution times. The goal is therefore to design {\em robust} schedules that are able to tolerate runtime
variations to a reasonable extent. More conservative WCET estimates are usually used for highly critical tasks
(e.g. braking in a car), while less conservative estimates suffice for low-critical tasks (e.g. displaying the
temperature in a car). In case the allocated time for a task is insufficient at runtime, i.e.,
the actual runtime of a task exceeds its estimated WCET, the execution of the task may nevertheless continue
and suppress subsequent tasks of lower criticality. 
For a recent thorough survey on mixed-criticality systems and arising scheduling problems we refer the reader to \cite{bur14}.

\mypara{The Triangle Scheduling Problem}
In this paper, we consider the problem of non-preemptively scheduling $n$ unit length jobs/tasks with different
criticality levels on a single machine. Let $p_i \in \mathbb{N}$ denote the criticality level
of job $i$. The expected execution time of every job is $1$. If however at runtime, a job $i$ requires more time,
then we continue its execution for at most $p_i$ time units, and other jobs with lower criticality levels that were
scheduled in these slots are canceled. The computed schedule has to fulfill the property that, when
prolonging the execution of a job, no other job with larger criticality needs to be canceled. The objective
is to minimize the {\em makespan} (total completion time).

The previous problem can be see as a one-dimensional triangle alignment (or scheduling) problem, defined as follows:

\begin{definition}[Triangle scheduling, Gap] \label{def:ts}
Given integers $p_1, \ldots, p_n$ with $p_1\geq \ldots \geq p_n>0$ find starting times $s_1, \ldots, s_n \geqslant 0$
minimizing the so-called \emph{makespan} $\max_j s_j + p_j$ such that for all $i \neq j$ we have
$| s_i - s_j | \geqslant \min \{ p_i, p_j \}$.
We call \emph{gap} any interval spanned by successive starting times, including the interval between
the maximum starting time and the makespan of the schedule.
\end{definition}
We abbreviate the triangle scheduling problem by \textsc{TS}.

\begin{figure}[ht]
\begin{center}
\begin{tikzpicture}[scale=0.35,thick]
\jobdiscrete{0}{6}{0}
\jobdiscrete{4}{4}{0}
\jobdiscrete{7}{3}{0}
\jobdiscrete{10}{5}{0}
\draw[->](0,0) --  (15,0) node[below] {time};

\draw (0,-2) rectangle (1,-3);
\draw (4,-2) rectangle (5,-3);
\draw (7,-2) rectangle (8,-3);
\draw (10,-2) rectangle (11,-3);
\draw[->](0,-3) --  (15,-3);

\draw (0,-4) rectangle (5,-5);
\draw (7,-4) rectangle (9,-5);
\draw (10,-4) rectangle (14,-5);
\draw[->](0,-5) --  (15,-5);

\draw (0,-6) rectangle (4,-7);
\draw (4,-6) rectangle (8,-7);
\draw (10,-6) rectangle (12,-7);
\draw[->](0,-7) --  (15,-7);

\begin{scope}[shift={(20,0)}]
\job{0}{6}{0}
\job{4}{4}{0}
\job{7}{3}{0}
\job{10}{5}{0}
\draw[->](0,0) --  (15,0) node[below] {time};
\end{scope}
\end{tikzpicture}
\end{center}
  \caption{Example of an optimal schedule. \textbf{Left:} discrete version. \textbf{Below:} different possible executions.  In the first one each job requires a single time slot and all jobs are executed.  In the second and third executions, jobs require longer processing times preventing the execution of some jobs with lower criticality.
  \textbf{Right:} continuous version. Each triangle
  corresponds to a job $j$ labeled with its criticality level $p_j$. The left border of
  a triangle is the starting time of the job, while the right end is its worst-case completion time.}
    \label{fig:ExampleSolution}
\end{figure}
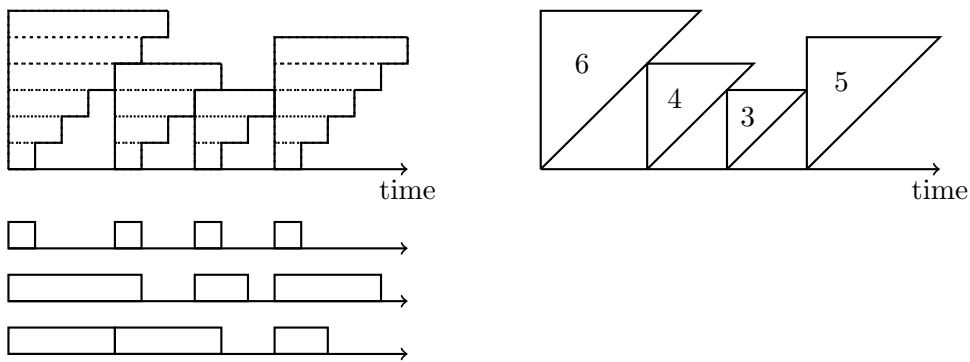


\mypara{Our Results}
In this paper, we initiate the study of \textsc{TS}. We show that the problem is strongly NP-hard,
which implies that \textsc{TS} does not admit a fully polynomial time approximation scheme (FPTAS),
unless $P=NP$. We provide a quasi polynomial time approximation scheme (QPTAS),
which implies that \textsc{TS} is not APX-hard.
In addition we present a Greedy algorithm, which processes the triangles from largest to smallest,
placing each into the largest gap and potentially shifting subsequent triangles to the right
if the gap is not large enough to contain it. We show that this algorithm has an approximation factor
between $1.05$ and $1.5$.

\mypara{Binary Tree Ratio} Furthermore, we establish a measure, denoted the {\em binary tree ratio}, that allows us to
distinguish hard from easy instances:
\begin{definition}[Binary tree ratio]
Given an instance of \textsc{TS} $p=p_1,\ldots,p_n$, we define its \emph{binary tree ratio} $R(p)$ as
\[
R(p) := \max_{i=2, \hdots,n} p_{\lceil i/2 \rceil} / p_i.
\]
\end{definition}
Schedules computed by our Greedy algorithm can be represented by binary trees on the jobs of the
problem instances (see Section~\ref{sec:greedy-tree} details). The binary tree ratio is the maximum ratio between a
job and its successor in the tree. We will show that our Greedy algorithm solves an instance to optimality if its binary
tree ratio is at most $2$. On the other hand, we prove that there are instances
with binary tree ratio strictly larger but arbitrarily close to $2$ that render the problem NP-hard.
A binary tree ratio of $2$ is hence the cut-off point that separates hard from easy instances.

\mypara{Other Related Works}
Since a few decades the scheduling community has been very interested in producing robust schedules that can react to changes in job characteristics.  For example in \cite{HonkompMockus:97:Robust-scheduling} a model is studied where the processing times can vary, and a schedule has to be produced with good objective value even under these variations. For more information see the survey \cite{BriandLa:07:A-robust-approach} as well as a recent PhD thesis and references therein \cite{Wilson:16:Robust-scheduling}.

Vestal \cite{ves07} introduced the \emph{mixed-criticality framework}, where the execution of lower critical jobs can be canceled in order to grant high criticality jobs the necessary amount of resources.  Applications are mostly embedded systems.  In a communication system, jobs represent messages, and safety-critical messages have to co-exist with less critical ones that are not subject to hard constraints.  For instance, the IEC 61508 standard defines four Safety Integrity Levels (SIL) (e.g. the importance of a safety-related job performing braking in a car is much higher than the importance of a job displaying the engine's temperature).
Similarly the \emph{CANaerospace} protocol specifies several criticality levels for messages, and in order to guarantee deliver times of high critical messages, the transmission of lower critical messages can be canceled \cite{Stock:14:CANaerospace-specification}.  This feature has been studied also for the \emph{FlexRay} protocol used in modern cars \cite{dvo14}.


Our model can be seen as a special case of the message transmission model described
by Hanz\'alek et al. in \cite{hts16}. They consider a single machine scheduling problem with release
times, deadlines, different criticality levels, and WCETs that depend on the task and the criticality
level. They propose a linear programming formulation of the problem and prove NP-completeness of this
more general problem. Note that our model does not consider release times and deadlines, and the WCET
of a task equals its criticality level.

Finally we would like to mention a connection with the computational problem of packing triangles in a given rectangle, which has applications in industrial cutting and storage. The later has been shown to be NP-hard~\cite{Chou:16:NP-Hard-Triangle}, while our paper shows that the problem is already hard in the particular case of right triangles that can only be vertically translated and not rotated.



\mypara{Outline}
In Section~\ref{sec:greedy} we consider our Greedy algorithms. Then,
in Section~\ref{sec:NPhardness}, we show that solving the problem for instances with binary tree ratio
strictly larger than $2$ is strongly NP-hard. Last, we provide a quasi polynomial
time approximation scheme (QPTAS) in Section~\ref{sec:QPTAS}.

\section{The Greedy algorithm}
\label{sec:greedy}




We propose a polynomial time approximation algorithm denoted \emph{Greedy}. Recall that
jobs are sorted with respect to their criticality levels, i.e., $p_1 \ge p_2 \ge \dots \ge p_n$.

\begin{definition}[Greedy]
Job 1 starts at time $s_1=0$.  Then, every job $j=2,\ldots,n$, is placed in a largest gap (the first one in case of tie).
 If the chosen gap has length $x$ and starts at time $s_i$, then the current job $j$ is placed at $s_j=s_i+p_j$.  If $2p_j>x$, then all jobs $k$ with $s_k>s_j$ are delayed by $2p_j-x$ in order to maintain feasibility, see Figure~\ref{fig:insert}.
\end{definition}

Note that in case $2p_j>x$ the makespan increases by $2p_j-x$. Hence by choosing the largest gap, Greedy minimizes the increase of the makespan at every step.

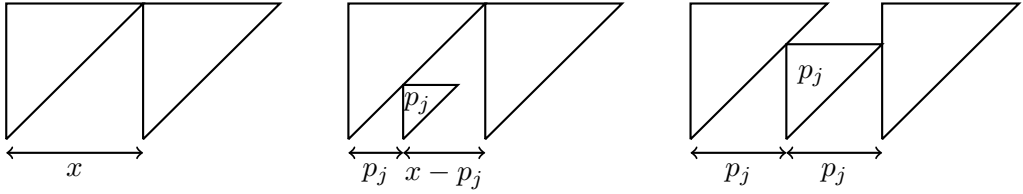
\begin{figure}[ht]
\begin{center}
\begin{tikzpicture}[scale=0.18,thick]
\joblabel{0}{10}{}
\joblabel{10}{10}{}
\joblabel{25}{10}{}
\joblabel{35}{10}{}
\joblabel{50}{10}{}
\joblabel{64}{10}{}
\joblabel{29}{4}{$p_j$}
\joblabel{57}{7}{$p_j$}
\draw[<->](0,-1) --  node[auto,swap] {$x$} (10,-1);
\draw[<->](25,-1) --  node[auto,swap] {$p_j$} (29,-1);
\draw[<->](29,-1) --  node[auto,swap] {$x-p_j$} (35,-1);
\draw[<->](50,-1) --  node[auto,swap] {$p_j$} (57,-1);
\draw[<->](57,-1) --  node[auto,swap] {$p_j$} (64,-1);
\end{tikzpicture}
\end{center}
  \caption{When Greedy inserts a job $j$ in a gap of size $x$ (left figure) it creates two gaps. One of size $p_j$ and another either of size $x-p_j$ if $x\geq 2p_j$ (center figure) or of size $p_j$ if $x<2p_j$ (right figure).}
    \label{fig:insert}
\end{figure}



\subsection{Lower bound on the optimum}







A simple lower bound can be obtained by relating gaps to jobs in the schedule
and using the fact that a gap between jobs $i$ and $j$ has size at least $\min\{p_i,p_j\}$.

\begin{lemma}  \label{lem:inner-product-lower-bound}
  Let $S=p_{\lceil n/2\rceil+1}+\ldots+p_n$ be the total processing time of the smaller half of
  the jobs, and $m=p_{(n+1)/2}$ if $n$ is odd and $m=0$ if $n$ is even. The optimal makespan OPT is at least
  \[
      m + 2S.
  \]
\end{lemma}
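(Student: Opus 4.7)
The plan is to fix an arbitrary feasible schedule, lower-bound its makespan by a weighted sum of the $p_k$, and then minimise the weighted sum over all feasible weight vectors.

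First I would order the jobs by non-decreasing start time, writing the permutation as $j_1, \dots, j_n$. Using $s_{j_1} \ge 0$ together with the spacing constraint $s_{j_{i+1}} - s_{j_i} \ge \min(p_{j_i}, p_{j_{i+1}})$, the makespan is at least
\[
s_{j_n} + p_{j_n} \ \ge\ \sum_{i=1}^{n-1} \min(p_{j_i}, p_{j_{i+1}}) + p_{j_n}.
\]
Next, for each job $k$ let $c_k$ be the number of indices $i \in \{1, \dots, n-1\}$ at which $k$ realises the above minimum (breaking ties arbitrarily so that each gap is attributed to exactly one job), plus the indicator $\mathbf{1}[k = j_n]$ coming from the trailing triangle. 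Then $\sum_k c_k = (n-1)+1 = n$, and because a position in the sorted order has at most two neighbours, $c_k \le 2$ for every $k$ (the corner case is $k = j_n$, where the lone incident gap contributes at most $1$ and the triangle adds another $1$). With this notation the previous inequality reads $\mathrm{OPT} \ge \sum_{k=1}^n c_k\, p_k$.

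It then remains to show that for any nonnegative integer vector $(c_k)$ with $c_k \in \{0,1,2\}$ and $\sum_k c_k = n$, the weighted sum $\sum_k c_k p_k$ is at least $m + 2S$. Since $p_1 \ge p_2 \ge \cdots \ge p_n$, a standard exchange argument (swap two coordinates in which $c$ is larger on a larger $p_k$) shows that the minimum is attained by a non-decreasing $c$; writing $a, b, c$ for the numbers of $0$s, $1$s, $2$s we must have $a = c$ and $b = n - 2a$, and another exchange shows the minimum is attained at $a = \lfloor n/2 \rfloor$. A direct computation then yields $2\sum_{k=\lceil n/2\rceil + 1}^n p_k$ when $n$ is even and $p_{(n+1)/2} + 2\sum_{k=(n+3)/2}^n p_k$ when $n$ is odd, which in both cases equals $m + 2S$.

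The only step that really needs care is the bookkeeping in the second paragraph: one must check that the "loser counts" $c_k$ indeed sum to $n$ and are bounded by $2$ uniformly, including the boundary positions and the extra contribution from the trailing triangle. The optimisation in the third paragraph is routine, but the parity split should be carried out explicitly so that the answer unifies to $m + 2S$ regardless of whether $n$ is even or odd.
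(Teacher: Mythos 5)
Your proposal is correct and follows essentially the same route as the paper's proof: both charge each gap (equivalently, each $\min(p_{j_i},p_{j_{i+1}})$ term plus the trailing triangle) to the smaller adjacent job, obtain $\{0,1,2\}$-weights summing to $n$, and minimise the resulting weighted sum. You merely spell out the final exchange-argument optimisation that the paper asserts in one line.
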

\begin{proof}
Consider an arbitrary feasible schedule, and let $T$ be its makespan.
To obtain a bound on $T$, we charge every gap to a job as follows:

Map every gap between jobs $i,j$ to the smaller job among them, breaking ties arbitrarily. Map the last gap
between job $j$ and the makespan to job $j$.
Now every job $j$ is the image of $0$, $1$ or $2$ gaps in this mapping. Let $a_j \in \{0, 1, 2\}$ be this number.
There are exactly $n$ gaps, hence we have $\sum_{j=1}^n a_j=n$.
Moreover, since every gap is mapped to a job of no larger size and the total gap size is $T$, we have
\[
     \sum_{j=1}^n a_j p_j \leq T.
\]
The proof follows from the fact that the left hand side is minimized when $a_j=0$ for the larger half of the
jobs, $a_j=2$ for the smaller half of the jobs and possibly $a_{(n+1)/2} = 1$ if $n$ is odd.
\end{proof}
Note that this lower bound can be very weak. Consider a $2$-job instance
with $p_1=M$ and $p_2=1$, for a large $M\geq 2$.  Then the lower bound states that the makespan is at least 2, while
the optimal makespan is $M$.

\clearpage
\subsection{Approximation ratio of Greedy}


\begin{lemma}
  The approximation ratio of Greedy is at most $1.5$.
\end{lemma}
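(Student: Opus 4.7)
My plan is to bound Greedy's makespan $G$ above by $\tfrac{3}{2}\,\text{OPT}$ by combining a structural analysis of Greedy's extensions with the lower bound of Lemma~\ref{lem:inner-product-lower-bound} applied to a suitable prefix of the instance.

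I start by isolating two invariants of Greedy. \emph{Invariant 1.} After Greedy has placed jobs $1,\dots,k$, every gap has size at least $p_k$. This is proved by induction on $k$: when a job of size $p_j$ is inserted into a gap of size $x\ge p_{j-1}\ge p_j$ the gap is split into two sub-gaps of sizes $p_j$ and either $x-p_j\ge p_j$ (no extension) or $p_j$ (extension). In particular the increment $\mathrm{inc}_j:=\max(0,\,2p_j-x)$ at step $j$ is at most $p_j$. \emph{Invariant 2.} If $\mathrm{inc}_j>0$ then the largest of the $j-1$ gaps present before step $j$ is strictly less than $2p_j$, so $G_{j-1}<2(j-1)p_j$; combining with the fact that the maximum gap is at least the average yields $\mathrm{inc}_j\le 2p_j-G_{j-1}/(j-1)$.

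Let $j^{\star}$ be the largest step at which Greedy extends (if no such step exists, then $G=p_1\le\text{OPT}$ and there is nothing to prove). Since the makespan does not change after step $j^{\star}$ we have $G=G_{j^{\star}}$, and the two invariants above give
\[
G\;\le\;G_{j^{\star}-1}\cdot\tfrac{j^{\star}-2}{j^{\star}-1}+2p_{j^{\star}}\;<\;2(j^{\star}-1)\,p_{j^{\star}}.
\]
On the lower-bound side, Lemma~\ref{lem:inner-product-lower-bound} applied to the prefix $\{p_1,\dots,p_{j^{\star}}\}$ (whose optimum lower-bounds the optimum of the full instance) yields $\text{OPT}\ge j^{\star}p_{j^{\star}}$, and trivially $\text{OPT}\ge p_1$.

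To conclude I plan a case split on $j^{\star}$. For $j^{\star}\le 4$ the two ingredients above already give $G/\text{OPT}<2(j^{\star}-1)/j^{\star}\le 3/2$. For $j^{\star}\ge 5$ the bound $\text{OPT}\ge j^{\star}p_{j^{\star}}$ alone is too weak, and one has to exploit additional structure. Here I intend to use the stronger form of Lemma~\ref{lem:inner-product-lower-bound}, namely $\text{OPT}\ge 2\sum_{i=\lceil j^{\star}/2\rceil+1}^{j^{\star}}p_i$, together with $\text{OPT}\ge p_1$. The key additional geometric observation is that if Greedy extends at steps $j_1<\cdots<j_r=j^{\star}$ then Invariant~1 says the right child gap created by $j_i$ has size exactly $p_{j_i}$, and Invariant~2 applied to $j_{i+1}$ requires this size to be below $2p_{j_{i+1}}$, so $p_{j_i}<2p_{j_{i+1}}$; consecutive extenders are within a factor of two, which forces the smaller-half sum to be comparable to $j^{\star}p_{j^{\star}}/2$ and lets the stronger lower bound absorb the remaining slack.

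The main obstacle will be making the last paragraph rigorous for $j^{\star}\ge 5$: the cascade $p_{j_i}<2p_{j_{i+1}}$ only controls the extender chain, while non-extender jobs placed in between also contribute to the smaller-half sum and must be handled carefully. My expectation is that a telescoping argument along the sequence of extensions, combined with Lemma~\ref{lem:inner-product-lower-bound} on an appropriately chosen prefix, will close the gap from $2$ to $3/2$.
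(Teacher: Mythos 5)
Your setup (Invariant 1, the observation that at an extension step $j$ all existing gaps are strictly below $2p_j$, and the reduction to the prefix ending at the last extension step $j^{\star}$) matches the paper's, and your argument for $j^{\star}\le 4$ is sound. But the proof is not complete: for $j^{\star}\ge 5$ you only express the expectation that a telescoping argument will work, and the ``key geometric observation'' you plan to build it on is false. It is not true that consecutive extension steps $j_i<j_{i+1}$ satisfy $p_{j_i}<2p_{j_{i+1}}$: the gap of size $p_{j_i}$ created at step $j_i$ need not survive until step $j_{i+1}$ --- intermediate non-extending insertions can subdivide it into several pieces, each individually below $2p_{j_{i+1}}$ while their sum $p_{j_i}$ is much larger. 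For instance, on sizes $10,10,3,3,3,3,3$ Greedy extends at step $2$ (size $10$) and next at step $7$ (size $3$), yet $10\not<6$. More fundamentally, your upper bound $G<2(j^{\star}-1)p_{j^{\star}}$ retains only the smallest job size; on an instance of $n$ equal jobs it is already off by a factor approaching $2$ from the true makespan, so no lower bound of the form $c\,j^{\star}p_{j^{\star}}$ can bring the ratio below $2$ in the limit.

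The paper closes the gap with a device you are missing. After restricting to the prefix ending at $j^{\star}$ (exactly your reduction), it right-shifts every job as far as possible and \emph{truncates} each job to the size of the gap that follows it, producing a modified instance $p'$ whose Greedy makespan equals $\sum_i p'_i = A+B$, where $A$ is the sum of the larger half of the $p'_i$ and $B$ of the smaller half. Since every gap --- hence every $p'_i$ --- lies in $[p_{j^{\star}},2p_{j^{\star}})$, one gets $A\le 2B$, while Lemma~\ref{lem:inner-product-lower-bound} applied to $p'$ gives $\mathrm{OPT}(p)\ge\mathrm{OPT}(p')\ge 2B$, so $A+B\le 3B\le\frac32\,\mathrm{OPT}(p)$. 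The essential idea is to express the makespan as a sum over \emph{all} jobs of quantities confined to a single factor-two window, rather than to control only the chain of extension steps; to repair your proof you would need to recover something equivalent to this truncation identity.
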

\begin{proof}

Consider the final schedule produced by Greedy on instance $p$. We perform the following transformations:

First, we delay each job by as much as possible, while maintaining feasibility, the makespan and the job order. 
This transformation might change the order of the gap sizes, but does not modify the actual gap sizes when viewed as a multi-set.

Second, we define a \emph{truncated} instance $p'$ as follows.
For every job $j$ which starts at some $t$ followed by a gap of size $x$ with $x<p_j$, we set $p'_j=x$.  For all other jobs $j$, we set $p'_j=p_j$, see Figure~\ref{fig:truncated}. Makespan as well as feasibility of the schedule are preserved by the truncation.

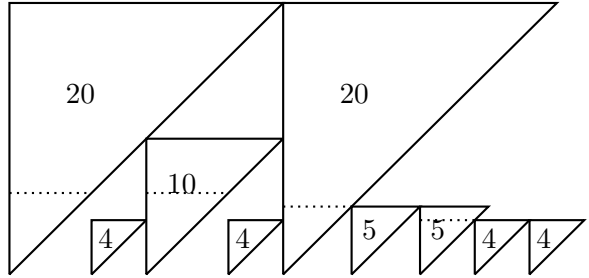
\begin{figure}
\begin{center}
\begin{tikzpicture}[scale=0.18,thick]
\job{0}{20}{0}
\job{6}{4}{0}
\job{10}{10}{0}
\job{16}{4}{0}
\job{20}{20}{0}
\job{25}{5}{0}
\job{30}{5}{0}
\job{34}{4}{0}
\job{38}{4}{0}
\draw[dotted] (0,6) -- (6,6);
\draw[dotted] (10,6) -- (16,6);
\draw[dotted] (20,5) -- (25,5);
\draw[dotted] (30,4) -- (34,4);
\end{tikzpicture}
\end{center}
\caption{Transformations on a schedule produced by Greedy: Jobs are delayed (right-shifted) and their sizes truncated (dotted lines).}
\label{fig:truncated}
\end{figure}

We assume that Greedy increased the makespan when placing the last job $n$. This assumption is without loss
of generality, since removing jobs that followed the last makespan increase only decreases the makespan of the optimal schedule,
while the makespan produced by the algorithm is preserved.

We claim that for all $i$ we have $p_n \leq p'_i < 2p_n$.  Indeed when job $n$ was placed, all gaps were of size strictly less than $2p_n$ since the insertion of job $n$ increased the makespan.  Furthermore, by induction, it can be shown that after placing job $j$, all gaps are of size at least $p_j$: By the induction hypothesis, after placing job $j-1$, all gaps were of size at least $p_{j-1}$
which is at least $p_j$, by the assumed ordering of the jobs. Then, no matter how job $j$ is placed, it is impossible that a gap of size smaller than $p_j$ is created (see also Figure~\ref{fig:insert}).

From now on, assume that $n$ is even; the proof for the odd case is similar. Let $A$ be the sum of the larger half of the
sizes among $p'_1,\ldots,p'_n$ and $B$ the sum of the smaller half. The truncation process reduces a job to the size of
gap that follows it. Therefore, the makespan of the schedule produced by Greedy on $p'$  (or on $p$)  is $A+B$.

The previous claim implies that all sizes among $p'_1,\ldots,p'_n$ are within a ratio of two, which implies $A\leq 2B$.

From Lemma~\ref{lem:inner-product-lower-bound} we have $\textrm{OPT}(p')\geq 2B$.  
Furthermore, we clearly have $\textrm{OPT}(p')\leq \textrm{OPT}(p)$. We can hence upper bound the makespan of the
schedule produced by Greedy on $p$ as
\[
  A + B \leq 3B \leq \frac32 \textrm{OPT}(p') \leq \frac32 \textrm{OPT}(p).
\]
\end{proof}

Note that this analysis did not use the fact that Greedy places jobs in the largest gap. The crucial property required
in the analysis is the fact that when the placement of a job $j$ increases the makespan, then all gaps are of size strictly
less than $2p_j$.

We were not able to determine the exact approximation factor of Greedy. In Figure~\ref{fig:lowerboundgreedy},
an instance is illustrated that shows that the approximation factor of Greedy is at least $1.05$:
On the instance with jobs of processing times $20,20,10,5,5,4,4,4,4$, Greedy
produces a schedule of makespan $42$, by placing them in the order $20,4,10,4,20,5,5,4,4$, while the optimal schedule
places the jobs in order $20,5,10,5,20,4,4,4,4$ and has makespan $40$.  This example gives the following lower bound.

\begin{lemma}
  The approximation ratio of Greedy is at least $1.05$.
\end{lemma}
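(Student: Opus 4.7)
The plan is to certify the claim by a single carefully chosen instance, namely
$p = (20, 20, 10, 5, 5, 4, 4, 4, 4)$
as depicted in Figure~\ref{fig:lowerboundgreedy}, and to show that Greedy produces a makespan of $42$ on this instance while a feasible schedule of makespan $40$ exists. The ratio $42/40 = 1.05$ then yields the lower bound.

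First I would simulate Greedy step by step, keeping track of the multiset of gaps. Placing $p_1=20$ and $p_2=20$ doubles the makespan to $40$ and creates two gaps of size $20$. Next, $p_3=10$ enters the first gap of size $20$, splitting it into two gaps of size $10$ each. The two jobs with $p=5$ then fill the second large gap, leaving (after step~5) the multiset of gaps $10, 10, 5, 5, 10$. Now the four jobs with $p=4$ are processed: the first three are inserted into the three gaps of size $10$ (each time using $2p_j = 8 \le 10$, so no shift occurs and each insertion creates gaps of sizes $4$ and $6$). At that point every remaining free space has size at most $6$, so when the last job with $p=4$ must be inserted into a gap of size $6$, the condition $2p_j > x$ forces a rightward shift of all subsequent jobs by $8 - 6 = 2$, raising the makespan from $40$ to $42$. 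I would present this trace compactly, either as a table of gap multisets after each step or by reading off the final positions $(0, 4, 8, 12, 16, 22, 27, 32, 36)$ and checking feasibility directly via $|s_i - s_j| \ge \min\{p_i, p_j\}$.

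Second, I would exhibit an explicit feasible schedule whose left-to-right sequence of processing times is $20, 5, 10, 5, 20, 4, 4, 4, 4$. Packing tightly, this gives starting times $0, 5, 10, 15, 20, 24, 28, 32, 36$, with consecutive distances $5, 5, 5, 5, 4, 4, 4, 4$ that exactly match the minimum of adjacent processing times; feasibility and makespan $40$ follow by inspection. Hence $\textrm{OPT}(p) \le 40$, while Greedy achieves $42$, so the approximation ratio is at least $42/40 = 1.05$.

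There is no genuine obstacle in this argument; it is a verification. The only subtle point is the tie-breaking rule when Greedy has multiple largest gaps — I would either state the rule explicitly (leftmost gap wins ties) and note that the example is robust to this choice, or present the instance together with its greedy trace so that the adversarial structure is transparent: the symmetry of the two size-$20$ jobs is broken by the size-$10$ job into gaps of $10$ that the size-$4$ jobs cannot share evenly with the size-$5$ jobs, precisely because Greedy has already committed the size-$5$ jobs to one side before the size-$4$ jobs arrive.
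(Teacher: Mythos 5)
Your proposal is correct and matches the paper's argument exactly: the paper proves the lemma with the same instance $20,20,10,5,5,4,4,4,4$, the same Greedy outcome of makespan $42$, and the same optimal schedule $20,5,10,5,20,4,4,4,4$ of makespan $40$ (Figure~\ref{fig:lowerboundgreedy}). You merely spell out the verification in more detail than the paper does, and your handling of the tie-breaking rule agrees with the paper's stated convention of choosing the first largest gap.
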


We conducted a systematic search for stronger
lower bound constructions, but could only obtain tiny improvements. For example, we
found an instance consisting of $52$ jobs showing a lower bound of $101/96 > 1.052$.

\begin{figure}[ht]
\begin{tikzpicture}[scale=0.16,thick]
\job{0}{20}{0}
\job{5}{5}{0}
\job{10}{10}{0}
\job{15}{5}{0}
\job{20}{20}{0}
\job{24}{4}{0}
\job{28}{4}{0}
\job{32}{4}{0}
\job{36}{4}{0}
\end{tikzpicture}
\hfill
\begin{tikzpicture}[scale=0.16,thick]

\job{0}{20}{0}
\job{4}{4}{0}
\job{10}{10}{0}
\job{14}{4}{0}
\job{20}{20}{0}
\job{25}{5}{0}
\job{30}{5}{0}
\job{34}{4}{0}
\job{38}{4}{0}
\end{tikzpicture}
\caption{The optimal schedule (left) and the schedule produced by Greedy (right) for the lower bound instance.}
\label{fig:lowerboundgreedy}
\end{figure}
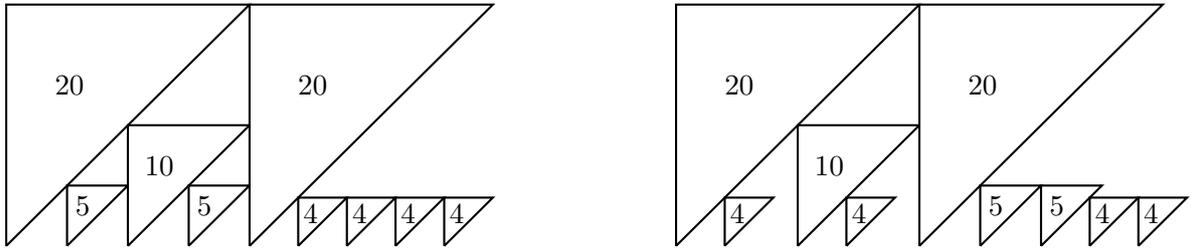



\subsection{A case where Greedy is optimal} \label{sec:greedy-tree}

\begin{theorem}
  Greedy is optimal for instances with binary tree ratio at most $2$.
\end{theorem}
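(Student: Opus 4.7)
The plan is to show that under $R(p)\le 2$, Greedy's makespan exactly equals the lower bound $m+2S$ of Lemma~\ref{lem:inner-product-lower-bound}, which then forces it to coincide with the optimum.

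The heart of the argument is a structural invariant on the multiset $G_j$ of gap sizes present after Greedy has placed the first $j$ jobs, proved by induction on $j$. I claim that for every $t\ge 1$ such that the indices are at most $n$,
\[
G_{2t-1} \;=\; \{p_t\}\cup\{p_k,p_k : t+1\le k\le 2t-1\}, \qquad G_{2t} \;=\; \{p_k,p_k : t+1\le k\le 2t\}.
\]
In particular, since $p$ is sorted, the largest element of $G_{j-1}$ — the gap into which Greedy places job $j$ — has size exactly $p_{\lceil j/2\rceil}$. The hypothesis $R(p)\le 2$ enters crucially here: because $p_{\lceil j/2\rceil}\le 2p_j$, the splitting rule of Figure~\ref{fig:insert} replaces the chosen gap with two new gaps each of size exactly $p_j$ (incurring a makespan increase of $2p_j-p_{\lceil j/2\rceil}\ge 0$, possibly zero). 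This realises the update $G_j = G_{j-1}\setminus\{p_{\lceil j/2\rceil}\}\cup\{p_j,p_j\}$, which I will check matches the predicted form of the invariant in both parity cases.

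With the invariant in hand, the proof concludes by observing that the gaps tile the schedule from $s_1=0$ to the makespan, so the makespan equals the total size of the final multiset $G_n$. Summing yields $2S$ when $n$ is even and $p_{(n+1)/2}+2S=m+2S$ when $n$ is odd. Combining with Lemma~\ref{lem:inner-product-lower-bound}, which gives $m+2S\le\mathrm{OPT}(p)$, and the trivial inequality $\mathrm{OPT}(p)\le\mathrm{Greedy}(p)$, all three quantities are equal and Greedy is optimal.

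The main obstacle will be the parity-sensitive bookkeeping in the induction: the transition from $G_{2t}$ to $G_{2t+1}$ introduces the singleton $\{p_{t+1}\}$ while the transition from $G_{2t-1}$ to $G_{2t}$ eliminates the singleton $\{p_t\}$, and one must also check the base case $t=1$ where the doubly-occurring part of the multiset is empty. A secondary subtlety is that ties among the $p_i$ may present Greedy with several candidate largest gaps; the tie-breaking rule fixes one, but since only the \emph{size} — not the identity — of the chosen gap enters the multiset update, the invariant is undisturbed.
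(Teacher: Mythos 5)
Your proof is correct and takes essentially the same route as the paper's: the identical gap-multiset invariant established by induction on the number of placed jobs (your explicit parity-split formulation is in fact cleaner than the paper's statement of it, which has a minor indexing slip), the same use of $R(p)\le 2$ to show each chosen gap splits into two gaps of size $p_j$, and the same conclusion by matching the total gap size $m+2S$ against Lemma~\ref{lem:inner-product-lower-bound}.
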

\begin{proof}
We show by induction on $j$ that after placing job $j$, there are $2$ gaps of size $p_i$, for every
$\lceil j/2\rceil + 1 \leq i < j$, and either a single gap (if $j$ is odd) or $2$ gaps (if $j$ is even) of size $p_{(j+1)/2}$.  This invariant is true after placing job $1$, where there is a single gap of size $p_1$.  When $j$ is even, the job is placed in the single gap of size $p_{j/2}$.  By the assumption on the binary tree ratio we have $2p_j \geq p_{j/2}$, implying that this gap is replaced by 2 gaps of size $p_j$.
When $j$ is odd, the job is placed in one of the 2 gaps of size $p_{(j-1)/2 + 1}$, and for the same reasons as in the even case the gap is replaced by 2 gaps of size $p_j$.  In both cases the invariant is preserved.

The implication of this observation is that by the lower bound of Lemma~\ref{lem:inner-product-lower-bound}, the schedule produced by Greedy is optimal.
\end{proof}

We can relate the jobs in a tree structure as illustrated in Figure~\ref{fig:tree}. Job $1$ is the root of the tree. Then for every job $j=2,\ldots,n$ inserted into a gap assigned to job $i$, job $j$ is a descendant of job $i$. The result is a single root, connected to a binary tree, which is complete except possibly for the last level, which is left padded.  The job labels on this tree are ordered by levels.  The binary tree ratio of the instance is the maximum ratio between a job and its immediate ancestor in the tree, which was the motivation for the name of this ratio.

\begin{figure}
\tikzset{
  treenode/.style = {align=center, inner sep=0pt, text centered,
    font=\sffamily},
  arn/.style = {treenode, circle, draw=black, text width=1.5em, very thick}
}
\begin{center}
\begin{tikzpicture}[scale=0.7,->,>=stealth',level/.style={sibling distance = 25cm/#1/#1,
  level distance = 1.5cm}]
\node [arn]{1}
child{ node [arn] {2}
    child{ node [arn] {3}
            child{ node [arn] {5}
              child{ node [arn] {9}} 
              child{ node [arn] {10}}
            }
            child{ node [arn] {6}
              child{ node [arn] {11}}
              child{ node [arn] {12}}
            }
    }
    child{ node [arn] {4}
            child{ node [arn] {7}
              child{ node [arn] {13}}
              child[missing]
            }
            child{ node [arn] {8}
            }
    }
    }
;
\end{tikzpicture}
\end{center}
\caption{Tree structure of the schedule produced by Greedy on any instance of 13 jobs with binary tree ratio at most $2$.}
\label{fig:tree}
\end{figure}

\section{NP-hardness}
\label{sec:NPhardness}

\begin{theorem}
  \textsc{TS} is strongly NP-hard for instances with binary tree ratio strictly larger than $2$.
\end{theorem}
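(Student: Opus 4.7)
\mypara{Proof plan}
I would reduce from the strongly NP-hard 3-PARTITION problem. Given $3m$ integers $a_1,\ldots,a_{3m}\in(B/4,B/2)$ with $\sum_i a_i=mB$, the constructed TS instance would consist of $3m$ ``item'' jobs of sizes proportional to the $a_i$'s, together with a ``scaffold'' of larger auxiliary jobs. The scaffold is to be designed so that every schedule of a certain target makespan $T^*$ is forced to place the scaffold in an essentially unique configuration, carving out exactly $m$ gaps of a common capacity. The items would then have to be distributed three per gap, and the constraint $|s_i-s_j|\geq\min(p_i,p_j)$ combined with the item-size bounds $a_i\in(B/4,B/2)$ would force each gap's three items to sum to exactly $B$, i.e., to form a 3-PARTITION triple.

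To keep the binary tree ratio just above~$2$, I would take scaffold sizes decreasing roughly geometrically with common ratio $\alpha=2+\epsilon$: $2^\ell$ jobs of size about $P_0/\alpha^\ell$ at level $\ell$, for $\ell=0,\ldots,L$ with $2^L=m$. The smallest scaffold size has to be calibrated so that, at the sorted-position boundary between scaffold and items, the ratio $p_{\lceil i/2\rceil}/p_i$ remains at most $\alpha$; since the items themselves may span almost a factor of~$2$, this will likely require a few intermediate ``filler'' sizes to smooth the transition between $P_L$ and $\max_i a_i$. Taking $\epsilon\to 0$ would then bring the binary tree ratio of the whole instance arbitrarily close to~$2$.

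The proof would then proceed in three stages: (i) exhibit a canonical schedule of makespan $T^*$ whenever the 3-PARTITION instance is solvable, by placing the scaffold hierarchically and slotting the items into the resulting gaps according to the partition; (ii) prove a rigidity lemma asserting that every schedule of makespan at most $T^*$ places the scaffold canonically, likely via a sharpened version of Lemma~\ref{lem:inner-product-lower-bound} that is tight only at the canonical arrangement; (iii) argue that inside each of the $m$ gaps thus obtained, the three items must sum to exactly $B$, yielding the 3-PARTITION solution. The main obstacle I expect is stage~(ii): since the scaffold itself has binary tree ratio strictly above~$2$, the optimality analysis of Section~\ref{sec:greedy-tree} does not apply, and ruling out creative alternative placements (including interleaving item jobs among the scaffold ones, or slight reorderings of equal-sized scaffold jobs) will require a delicate tailored argument. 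A secondary difficulty is maintaining the global binary tree ratio close to~$2$ at the scaffold/item boundary, which simultaneously constrains how small the smallest scaffold can be and hence how rigidly it can be forced into canonical positions.
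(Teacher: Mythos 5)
Your high-level plan (reduce from a 3-partition-type problem, use large separator jobs to carve the timeline into blocks, then force the small jobs to pack three per block) is the same skeleton as the paper's reduction, but two of your three stages have genuine gaps, and the paper's construction is specifically engineered to avoid both. First, your rigidity stage (ii): you correctly identify it as the main obstacle, but the hierarchical geometric scaffold with ratio $2+\epsilon$ makes it genuinely hard, and you offer no argument. The paper sidesteps this entirely by using $n$ \emph{equal-size} separator jobs $E$ of size $8M+5D$: since any two must be at distance at least $8M+5D$, a makespan of $n(8M+5D)$ forces them to sit at exactly $0,\,8M+5D,\,2(8M+5D),\ldots$, so rigidity of the block structure is immediate. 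The binary tree ratio is then controlled by a different mechanism than your geometric scaffold: there are $n$ jobs of each of five types with leading terms $8M,4M,2M,2M,M$, so for every index $i$ the job $p_{\lceil i/2\rceil}$ lies in the adjacent larger type and the ratio tends to $2$ as $M\to\infty$; no filler jobs are needed.

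Second, and more seriously, your stage (iii) fails as stated. If three items of sizes $x_1\ge x_2\ge x_3$ sit in a block between two much larger jobs, the minimum span they require is $x_{(1)}+\min(x_{(1)},x_{(2)})+\min(x_{(2)},x_{(3)})+x_{(3)}$ over orderings, which is minimized at $2(x_2+x_3)$ by placing the largest item in the middle --- the largest item's size does not appear at all. A capacity bound on the block therefore constrains only $x_2+x_3$ and cannot force $x_1+x_2+x_3=B$. The paper's construction is built precisely to repair this: it reduces from \emph{numerical 3-dimensional matching}, so the three items in a block are one each of types $A\approx 2M$, $B\approx 2M$, $C\approx M$ (plus an $F$ job of size $4M$ that pads the block and pins down the gap-charging weights), and the item value $a_i$ is encoded with coefficient $2$ in the $A$-job size ($2M+2a_i+D$) because that job receives charging weight $1$ while $B$ and $C$ receive weight $2$. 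The net effect is that the block lower bound comes out as $8M+3D+2(a_i+b_j+c_k)$, with every item value appearing with the \emph{same} coefficient, which is what actually forces $a_i+b_j+c_k\le D$ and hence equality by the global sum condition. Without an analogous asymmetric encoding, your symmetric three-items-per-gap plan cannot extract the sum constraint.
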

\begin{proof}
We reduce from the strongly NP-hard numerical $3$-dimensional matching problem, see \cite[problem SP16]{GareyJohnson:79:Computers-and-intractability}.  An instance of this problem consists of integers
\[
      a_1,\ldots,a_n, b_1,\ldots,b_n, c_1,\ldots,c_n, D,
\]
with $D\geq 4$, and for all $i$:
\begin{equation}
      D/4 < a_i,b_i,c_i < D/2.
      \label{eq:NPC-promise-indiv}
\end{equation}
Furthermore, we are guaranteed that
\begin{equation}
 \sum_{i=1}^n a_i + b_i + c_i =  nD.
    \label{eq:NPC-promise-sum}
\end{equation}
The goal is to form $n$ disjoint triplets of the form $(i,j,k)$ with $a_i+b_j+c_k=D$.

Fix some arbitrary large constant $M \geq \frac{5D}{4}$.  The instance consists of $5n$ jobs.
\begin{itemize}
\item There are $n$ jobs $E$ of size $8M+5D$.
\item There are $n$ jobs $F$ of size $4M$.
\item For every $i\in\{1,\ldots,n\}$ there is a job $A_i$ of size $2M+2a_i+D$,
\item as well as a job $B_i$ of size $2M+b_i$,
\item and a job $C_i$ of size $M+c_i+D$.
\end{itemize}

We claim that the instance has a solution of makespan $n(8M+5D)$ if and only if the initial 3-Partition instance has a solution.

For the easy direction, given a solution to the 3-Partition instance we construct a schedule consisting of the concatenation for every triplet $(i,j,k)$ of the jobs $E,A_i,F,B_j,C_k$.  Straightforward verification shows that the resulting schedule has the required makespan.

For the hard direction, consider a solution to \textsc{TS} of makespan $n(8M+5D)$.  Its makespan cannot be smaller, by the presence of the $E$ jobs, that need to be scheduled every $8M+5D$ time units.  They structure the time line into $n$ blocks of equal size $8M+5D$ each.

Suppose that a block contains $k$ jobs of size $x_1\geq \ldots \geq x_k$ plus a single $E$ job.
These jobs create $k+1$ gaps in this block.  Each gap can be assigned to the smaller one among the neighboring jobs, and for an assignment we denote by $a_i\in\{0,1,2\}$ the number of assignments the job of size $x_i$ obtained.

Hence a lower bound on the block size is given by the expression $a_1x_1+\ldots+a_kx_k$ for some $\{0,1,2\}$-weights with $a_1+\ldots+a_k=k+1$.  A valid lower bound is given by $a_1 x_1 + \ldots +a_k x_k$ setting $a_i$ to $2$ for the last $\lceil k/2 \rceil$ indices (corresponding to the smaller jobs), setting $a_{k/2-1}=1$ if $k$ is even and setting $a_i=0$ to the remaining indices.  We will use this lower bound to determine the number of jobs of each type in block.

No block can host two $F$ jobs or more, since $3(4M) > 8M+5D$.  Hence every block contains exactly one $F$ job.

No block can host an $F$ and two $A$ jobs, since (placing weights $a_1=a_2=2$ for the $A$ jobs and $a_3=0$ for the $F$ job)
\[
     4 \left(2M+2\frac D 4  + D\right) = 8M + 6D.
\]
Hence every block contains exactly one $F$ and one $A$ job.

Similarly a block cannot not host an $A$ and $F$ job, together with two $B$ jobs, as setting total $a$-weight $4$ for the $B$ jobs and $1$ for the $A$ job results in the lower bound
\[
   4 \left( 2M+ \frac D 4\right) + 2M + 2\frac D 4  + D =  10M + \frac52 D.
\]
Hence every block contains exactly one $F$ one $A$ and one $B$ job.

Finally a block cannot host an $A$, $F$ and $B$ job, together with two $C$ jobs, as setting total $a$-weight $4$ for the $C$ jobs and $2$ for the $B$ job results in the lower bound
\[
   4 \left(M + \frac D 4 + D\right) + 2 \left( 2 M + 2 \frac D 4 \right)  =  8M + 6D.
\]

In conclusion every block contains an $F$ job, and $A_i$ job, a $B_j$ job and a $C_k$ job with the following lower bound for the space occupied for these jobs, using $a$-weight 2 for jobs $B_j,C_k$ and $a$-weight $1$ for job $A_i$
\[
    2M + 2a_i + D + 2(2M+b_j) + 2(M+c_k + D) = 8M + 3D + 2(a_i+b_j +c_k).
\]
Since this value cannot exceed the size of a block, namely $8M+5D$, every block corresponds to a triplet $(i,j,k)$ with $a_i+b_j+c_k\leq D$.  By the assumption (\ref{eq:NPC-promise-sum}) we have equality for every triplet, and this shows that there is a solution to the 3-Partition instance, see Figure~\ref{fig:block} for illustration.

By construction when the jobs are ordered in decreasing size, they are grouped by types, in the order $F,E,A,B,C$.  Hence the binary tree ratio is determined by the ratio between an $F$ and an $E$ job, between a $B$ and an $F$ job, and between a $C$ and an $A$ jobs.  All these fractions can be made arbitrarily close to $2$ by choosing a large enough value for $M$.
\end{proof}

\begin{figure}[ht]
 \begin{center}%
 \begin{tikzpicture}[scale=0.6,thick]
 \joblabel{0}{8.5}{E}
 \joblabel{2.2}{2.2}{A}
 \joblabel{3.2}{1}{C}
 \joblabel{4.4}{4}{F}
 \joblabel{6.4}{2}{B}
 \end{tikzpicture}
  \caption{A schematic view of a block in the optimal schedule obtained from the reduction.}%
  \label{fig:block}
 \end{center}
\end{figure}
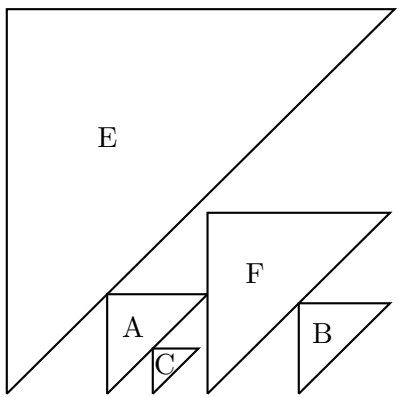




\section{A QPTAS}
\label{sec:QPTAS}

\begin{theorem}
  \textsc{TS} admits a quasi polynomial time approximation scheme.
\end{theorem}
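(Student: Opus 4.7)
The plan is to combine Greedy's $1.5$-approximation with a recursive enumeration that refines the estimate to within $(1+\epsilon)$ of $\mathrm{OPT}$ in quasi-polynomial time. First, invoking Greedy yields $T_0$ with $\mathrm{OPT} \le T_0 \le 1.5\,\mathrm{OPT}$. I would try $O(\epsilon^{-1})$ geometrically spaced targets $T \in [T_0/1.5,\,T_0]$, so that some target satisfies $T \in [\mathrm{OPT},(1+\epsilon)\mathrm{OPT}]$. For each target we run a decision procedure that either produces a schedule of makespan at most $(1+O(\epsilon))T$ or declares infeasibility. Before running the procedure I would round each $p_i$ up to the nearest power of $(1+\epsilon)$; since inputs are integers and $T = \mathrm{poly}(n)$ (the input is unary, as the problem is strongly NP-hard), this produces $K = O(\epsilon^{-1}\log n)$ distinct rounded sizes, and the rounding costs at most a $(1+\epsilon)$ factor in the makespan because jobs can be delayed sequentially to absorb the extra separations.

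The decision procedure is a recursive subroutine $\textsc{Pack}(J,[a,b])$. Call the jobs in $J$ with $p_j \ge \epsilon(b-a)$ the \emph{big} jobs of the call; since their pairwise separations are at least $\epsilon(b-a)$, there are at most $1/\epsilon$ of them. I would enumerate (i) their ordering within $[a,b]$ (at most $(1/\epsilon)!$ choices); (ii) their exact starting times on a grid of resolution $\epsilon(b-a)/n$, giving at most $n^{O(1/\epsilon)}$ configurations; and (iii) a distribution of the remaining \emph{small} jobs into the at most $1/\epsilon+1$ sub-intervals carved out by the big-job placement, which by a multiset-counting DP over the $K$ rounded size classes contributes $n^{O(K/\epsilon)}$ possibilities. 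For each joint choice we recurse into every sub-interval with its assigned small-job multiset.

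Termination is ensured because the rounded size class index of the largest remaining job strictly decreases at each recursive descent that is actually needed, so the recursion depth is bounded by $K = O(\epsilon^{-1}\log n)$. Multiplying the per-call work $n^{O(K/\epsilon)}$ across $K$ levels gives a total running time of $n^{O(K^2/\epsilon)} = n^{O(\epsilon^{-3}\log^2 n)}$, which is quasi-polynomial for every fixed $\epsilon>0$. Calls with no big jobs can be discharged by Greedy, whose output on such small-job-only instances is within $(1+\epsilon)$ of optimal by the analysis of the previous section.

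The main obstacle will be the structural lemma that justifies the decomposition: there is a $(1+O(\epsilon))$-approximate schedule in which, at every recursion level, no small job straddles a sub-interval boundary produced by the big-job placement. I would prove it by induction on the level: starting from an optimal rounded schedule, snap the big jobs to the $\epsilon(b-a)/n$-grid (at a cost of at most $\epsilon(b-a)$ in the makespan, absorbed by delays), and slide any small job that still crosses a boundary entirely into the adjacent gap, using a swap or a local shift. Each level incurs at most an $O(\epsilon T)$ additive loss, summing to $O(K\epsilon T)$ over the $K$ levels, so one finally replaces $\epsilon$ by $\epsilon/K$ throughout the analysis, which retains the quasi-polynomial running time while achieving the claimed $(1+\epsilon)$-approximation.
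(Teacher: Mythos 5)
Your recursive big/small decomposition is a genuinely different route from the paper's (which, after the same rounding of sizes to powers of $1+\epsilon$ and of start times to a grid of size $O(n^2/\epsilon)$, runs a single left\--to\--right dynamic program whose state is the multiset of remaining jobs plus the rightmost start time of each size class). However, your version has a fatal gap at its base case. You discharge calls containing no big jobs by running Greedy and assert its output is within $(1+\epsilon)$ of optimal ``by the analysis of the previous section''. No such statement exists: that section proves a ratio of $1.5$ in general and optimality only when the binary tree ratio is at most $2$, a condition unrelated to all jobs being small compared to the interval. Concretely, take $N$ disjoint copies of the paper's own lower-bound instance $20,20,10,5,5,4,4,4,4$: the optimum is $40N$ (concatenate the optimal blocks), Greedy produces $42N$, and for $N>1/(2\epsilon)$ every job has size below $\epsilon\cdot\textrm{OPT}$, so the very first call of your recursion already has no big jobs and returns a $1.05$-approximation rather than a $(1+\epsilon)$ one. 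Worse, your termination argument funnels into this broken case: a sub-interval carved out by the big jobs can have essentially the full length of its parent (e.g.\ a single big job at the left end), so the child's bigness threshold $\epsilon(b'-a')$ need not drop, the largest remaining size class need not decrease, and the child call again has no big jobs. Replacing the base case by something $(1+\epsilon)$-accurate on small-jobs-only instances is essentially the original problem again.

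Two further issues. Your bound $K=O(\epsilon^{-1}\log n)$ rests on ``$T=\mathrm{poly}(n)$ because the input is unary, as the problem is strongly NP-hard''; this is backwards --- strong NP-hardness means the problem stays hard under unary encoding, but an approximation scheme must still accept binary inputs. The correct fix is the paper's preprocessing: jobs of size below $\epsilon p_1/n$ can be appended at the end at a cost of a $(1+\epsilon)$ factor, after which $p_1/p_n\le n/\epsilon$ and the number of rounded size classes is $O(\epsilon^{-1}\log(n/\epsilon))$ regardless of encoding. Finally, your structural lemma about small jobs ``straddling sub-interval boundaries'' treats a job as exclusively occupying $[s_j,s_j+p_j]$; in this problem only start times are constrained and triangles may overlap, and two small jobs whose start times lie in gaps separated by a big job's start time automatically satisfy their mutual constraint (their separation is at least the sum of their sizes). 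So the decomposition you want needs no such lemma --- but the real obstacles above remain, and the paper's configuration DP, which observes that a prefix of the schedule influences the future only through the rightmost start time of each size class, avoids them entirely.
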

\begin{proof}
The proof starts with a sequence of claims.

\begin{claim}
  Rounding all sizes up to the nearest power of $1 + \epsilon$ changes the optimal makespan
by at most a factor $1 + \epsilon$.
\end{claim}
\begin{proof}
  Take an optimal solution and multiply all processing times and start times by factor $1 + \epsilon$. The solution is still feasible, only the unit has changed. Now round all triangles down to the nearest power of $1 + \epsilon$  while keeping the start times fixed.
\end{proof}

\begin{claim}
We may assume that the ratio $p_1/p_n$ is at most $n/\epsilon$.
\end{claim}
\begin{proof}
  The optimal makespan $\textrm{OPT}$ is at least $p_1$. All triangle with size less than $\epsilon p_1/n$ can be put at the end. We do not need to optimize over these, as they increase the makespan by at most an $\epsilon$ factor.
\end{proof}

From now let the smallest triangle have size $p_n=1$ and the largest have size $p_1 \leq n/\epsilon$. Then $p_1 \leq \textrm{OPT} \leq n p_1 \leq n^2/\epsilon$.

\begin{claim}
We may assume there are only $\lceil \log (n/\epsilon) \rceil +1$  number of different sizes.
\end{claim}
The proof follows from the previous two claims.

\begin{claim}
Restricting start times to values from the set $P=\{0, K, 2K, ..., \lceil n^2/\epsilon \rceil K\}$ for $K = \epsilon p_1/n$ increases the optimal makespan at most by factor $1+\epsilon$.
\end{claim}
\begin{proof}
We modify the start times of the jobs, processing them from left to right. For every job, we move it to the next time in $P$, and move simultaneously all subsequent jobs by the same amount to preserve feasibility.  The value of the solution increases by at most $nK \leq \epsilon p_1 \leq \epsilon \textrm{OPT}$.
\end{proof}
Let $Z$ be the set of all different job sizes after the above rounding.
We have
\begin{align*}
  |Z| & = \lceil \log (n/\epsilon) \rceil +1
  \\
  |P| &\leq \lceil n^2 / \epsilon \rceil
\end{align*}
We design a dynamic programming scheme as follows.
Partition the set of jobs into sets $S$ and its complement $\overline S$. We want to compute the optimal schedule placing first jobs from $\overline S$ and then jobs from $S$.  Only a few parameters from the first schedule influence the possibilities of the second schedule, namely the position of the rightmost triangle of each size. Hence we describe every possible configuration by a vector from $P^Z$, which leads to a quasi polynomial number of configurations. The number of possible subsets $S$ is (in terms of size multiplicities) $O(n^Z)$ which is also quasi-polynomial.

Define $F (C, S)$ to be the optimal makespan of a schedule, placing first $\overline S$ with a configuration $C$ and placing then jobs from $S$.
The goal is to compute $F(e, \{1,\ldots,n\})$, where $e$ is the empty configuration, which for technical reasons assigns to each size $x\in Z$ the starting time $-x$.

The basis cases consist of $F(C,\emptyset)$, where $C$ ranges over all configurations from $P^Z$, some of them might be infeasible.
Then $F (C, S)$
 can be computed from $F (C+j, S-j)$ where $S-j$ is $S$ minus one triangle $j\in S$ and $C+j$ is obtained from $C$ by adding the triangle $j$ to the right of $C$ and placing it as early as possible, i.e.\ at time $\max_{x\in Z} C_x + x$.
 The number of choices for $j$ (in terms of size) is at most $|Z|$. Hence, we need at most $|Z|$ look-ups to compute one value.
\end{proof}

\clearpage
\section{Final remarks}

\shapepar{{3}{0}b{0}\\{0}t{0}{6}\\{6}e{0}}
We introduced a new scheduling problem, motivated by mixed criticality. The novelty lies in its combinatorial structure which defines the contribution to the makespan of each job in a non-local manner. We showed that the problem is strongly NP-hard, thus ruling out the existence of a fully polynomial time approximation schedule.  In addition we provided a quasi polynomial time approximation scheme, ruling out APX-hardness.  Furthermore, we introduced a greedy algorithm for this problem, but still do not understand well its approximation ratio. Closing the gap between the lower bound of $1.05$ and the upper bound of $1.5$ is the main question left open by this paper.
We would like to thank Marek Chrobak and Neil Olver for helpful discussions.
This work is partially supported by PHC VAN GOGH 2015
PROJET 33669TC, the grants FONDECYT 11140566 as well as ANR-15-CE40-0015.

\bibliographystyle{plain}
\bibliography{references}

\end{document}